\definecolor{darkgreen}{rgb}{0.0,0,0.9}
\newcommand{\setword}[2]{%
  \phantomsection
  #1\def\@currentlabel{\unexpanded{#1}}\label{#2}%
}
\numberwithin{equation}{section}
\newtheorem{theorem}{Theorem}[section]
\newtheorem{lemma}{Lemma}[section]
\newtheorem{observation}{Observation}[section]
\title{Sliding $k$-Transmitters: Hardness and Approximation}
\author{
Therese Biedl, Saeed Mehrabi, and Ziting Yu\\
\small{Cheriton School of Computer Science}\\
\small{University of Waterloo, Waterloo, Canada.}\\
\small{{\texttt{\{biedl, smehrabi, z44yu\}@uwaterloo.ca}}}}
\date{}
\begin{document}

\maketitle

\begin{abstract}
A \emph{sliding $k$-transmitter} in an orthogonal polygon $P$ is a mobile guard that travels back and forth along an orthogonal line segment $s$ inside $P$. It can see a point $p\in P$ if the perpendicular from $p$ onto $s$ intersects the boundary of $P$ at most $k$ times. We show that guarding an orthogonal polygon $P$ with the minimum number of $k$-transmitters is \textsc{NP}-hard, for any fixed $k>0$, even if $P$ is simple and monotone. Moreover, we give an $O(1)$-approximation algorithm for this problem.
\end{abstract}

\section{Introduction}
\label{sec:introduction}
Art gallery problems are one of the standard problems in computational
geometry.  In the original setting, we are given a polygon (modelling
the art gallery) and we want to know a set of points (modelling guards
or cameras) that can see any point in the polygon, where ``see'' in the
original setting means that 
the line segment from the guard to the point is inside the polygon.
There have been numerous result, concerning bounds on the number of
guards needed, NP-hardness and approximation algorithms.  See 
e.g.~\cite{orourke1987,krohn2013} and the references therein.

Recently, motivated by covering a region with wireless transmitters,
Aichholzer et al.~\cite{aich09} introduced
variants where guards can see through
a limited number of walls. Hence a {\em $k$-transmitter}
is a point $p$ in a polygon $P$ that is considered to see all points $q$ 
in $P$ for which the line segment $\overline{pq}$ intersects the boundary
of $P$ at most $k$ times. Only cases of even $k$ are
interesting.  

We combine in this paper the concept of a $k$-transmitter with the
concept of a mobile guard.  A {\em mobile guard} is a guard that
is not stationary, but walks along a line segment $s$ inside the
polygon, and can see all points that are visible from some point of $s$.
For orthogonal polygons, a common restriction has been to demand that
line segment $s$ is horizontal or vertical, and that it guards only
those points $p$ that it can see in an orthogonal fashion, i.e., the
perpendicular from $p$ onto $s$ is inside $P$. This is
called a {\em sliding camera}.   We combine the concept of sliding
cameras with $k$-transmitters,
and hence define a {\em sliding $k$-transmitter} as follows:
It is a horizontal or vertical line segment $s$ inside an orthogonal polygon $P$
and it can see all points $p$ such that the perpendicular from $s$ onto
$p$ intersects the boundary of $P$ at most $k$ times.
We allow sliding $k$-transmitters to include edges of
the polygon.%
\footnote{With some minor modifications,  the results in this paper
also hold if guards must 
be strictly inside $P$ except at their end.} The objective is to guard $P$ with the minimum number of sliding $k$-transmitters.

\paragraph{Related Work.} Sliding cameras 
were introduced by Katz and Morgenstern~\cite{katz2011}.
Finding the minimum set of sliding cameras is
NP-hard in polygons with holes \cite{durocherM2013}, 
even if only horizontal sliding cameras
are allowed \cite{BiedlCLMMV}.  The optimum set of sliding cameras can be
found in polynomial time for monotone polygons \cite{deBergDM2014}.
The complexity for simple polygons is open.

Finding the minimum set of $k$-transmitters is NP-hard in simple polygons
\cite{cannon15}, regardless whether the transmitters are points or 
polygon-edges.  Numerous bounds are known on the number of 
$k$-transmitters that are necessary and sufficient, depending on the
type of transmitter (point or edge) and the type of polygon
\cite{aich09,aich15,Ballinger13,cannon15}.
Regarding sliding $k$-transmitters, an approximation algorithm for monotone polygons is claimed in~\cite{Mehrabi15}, but
the algorithm needs a minor modification to deal with an example (private communication); it is not clear
whether this modification suffices. Other optimization criteria for
sliding $k$-transmitters have also been considered~\cite{mahdaviSG2014}.

\paragraph{Our Results.} In this paper, we study the complexity of
finding the minimum set of sliding $k$-transmitters to guard an
orthogonal polygon.  Unsurprisingly, we can show that this is NP-hard,
but we prove NP-hardness even in a very restricted version:  The 
polygon is orthogonal and $y$-monotone, and
 there is an optimal solution with only horizontal sliding $k$-transmitters. 
We are not aware of {\em any}
other variant of the art gallery problem that is NP-hard on orthogonal monotone
polygons (the traditional art gallery problem is NP-hard for monotone
polygons \cite{krohn2013}, but slanted edges are crucial for the reduction
to work).  

As a second result, we show that the $O(1)$-approximation
algorithm that we recently developed for sliding cameras \cite{BiedlCLMMV}
works similarly for sliding $k$-transmitters.
Hence we have an $O(1)$-approximation for finding the minimum set
of sliding $k$-transmitters, in any (not necessarily simple) orthogonal
polygon.   The algorithm works also (and becomes even easier) if only
horizontal sliding $k$-transmitters are allowed.

\section{Hardness}
\label{sec:hardness}
In this section, we show that guarding with sliding $k$-transmitters
is NP-hard, even if the polygon is orthogonal and monotone (hence simple).
We first prove this for $k=2$ and then extend to larger $k$.

\subsection{Sliding $2$-Transmitters}
\label{sec:npHardForTwo}
We use a reduction from Minimum Vertex Cover in a graph $G$, which is known to be \textsc{NP}-hard even if $G$ is required to be planar and 2-connected (see e.g.~\cite{BKK97}).
So the objective is to compute a minimum set $C$ of vertices such that every edge has at least one endpoint in $C$. 

Given a planar 2-connected graph $G$ with $n$ vertices and $m$ edges, we first compute a \emph{bar visibility representation} of $G$ in which each vertex is assigned a horizontal line segment (called \emph{bar}) and for each edge there is a vertical \emph{strip} with positive width that connects the bars of endpoints and does not intersect other vertices.
It has been shown multiple times (see e.g.~\cite{TamassiaT1986}) that this exists and can be computed in linear time.  We may move vertex-bars up and down slightly as needed so that all vertex-bars have distinct $y$-coordinates.  Also, since edge-strips have positive width, we can make them thin enough such that no two of them have overlapping $x$-range.   Since the graph is 2-connected,
the construction in \cite{TamassiaT1986} guarantees that all vertices except
the bottommost one have a neighbour below, and all vertices except the
topmost one have a neighbour above. 

\paragraph{Gadgets.} 
We start by thickening each vertex-bar into a box, and place three copies of this box above each other with the same $x$-range.  
These three boxes are connected to each other by {\em channels}, which are thin vertical corridors (thin enough so that their $x$-range is strictly within that of the vertex-box, and does not intersect an edge-strip).
We place these two channels at opposite ends of the vertex-boxes, resulting in a $Z$-shape or an $S$-shape (the choice between the two is arbitrary for now, but will be determined later). We call the result a \emph{vertex-gadget}; see Fig.~\ref{fig:firstGadgets}.  
By making the height of boxes small enough, we may assume that no
two vertex-gadgets have overlapping $y$-range.  

For each edge $e$,  the \emph{edge-gadget} of $e$ is a small
axis-aligned box placed strictly within the strip representing $e$
in such a way that its $y$-range intersects no $y$-range of another 
(vertex- or edge-) gadget.
See Fig.~\ref{fig:firstGadgets}.  
Notice that from any edge-gadget there are vertical lines-of-sight
to the vertex-gadgets of the endpoints of the edge.

\paragraph{The Reduction.}
Let $P'$ be the polygon
obtained by replacing all vertex-bars and edge-strips with these gadgets.
$P'$ is $y$-monotone (i.e., any horizontal line intersects it in one
interval), but not connected (for now we allow the polygon to be disconnected, but we will discuss the modifications to make it connected later).

\begin{figure}[t]
\centering
\includegraphics[width=0.75\textwidth]{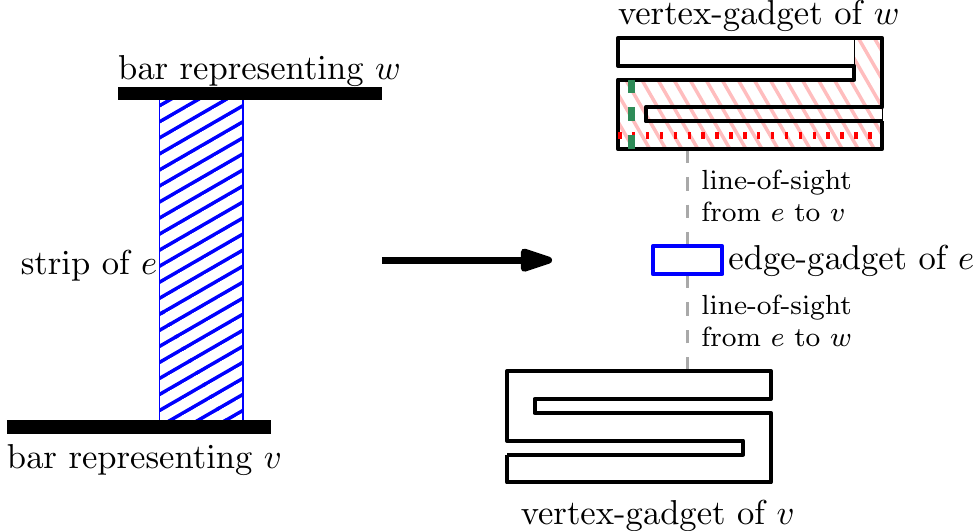}
\caption{Vertex- and edge-gadgets.  The pink (falling pattern)
region is guarded by the red (dotted) horizontal $2$-transmitter.
Note that it includes everything that the
green (dashed) vertical $2$-transmitter can see.
}
\label{fig:firstGadgets}%
\end{figure}

Since no $y$-ranges overlap, one can easily verify that vertical 2-transmitters
are never required.

\begin{observation}
\label{obs:no_vertical}
Any vertical sliding 2-transmitter in $P'$ can be replaced by 
a horizontal sliding 2-transmitter that guards at least as much.
\end{observation}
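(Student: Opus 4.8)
The plan is to first pin down exactly what a vertical transmitter can see, and then exhibit a horizontal transmitter that dominates it. Let $s$ be a vertical sliding $2$-transmitter at $x$-coordinate $x_0$ spanning the $y$-interval $[y_1,y_2]$. Since $P'$ is $y$-monotone, at every height $y$ the polygon meets the horizontal line at height $y$ in a single solid interval. Consequently, for any point $p=(x_p,y_p)$ with $y_1\le y_p\le y_2$ the horizontal sight-segment from $(x_0,y_p)$ to $p$ stays inside this one interval and crosses the boundary of $P'$ zero times, while any point with $y_p\notin[y_1,y_2]$ has its perpendicular foot off $s$ and is not seen. Hence $s$ sees \emph{exactly} the horizontal slab $S:=P'\cap\{y_1\le y\le y_2\}$; in particular a vertical $2$-transmitter here is no stronger than a vertical $0$-transmitter.

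Next I would locate $s$ within the gadget structure. Because distinct gadgets have disjoint $y$-ranges, $s$ lies inside a single gadget $g$. Within a vertex-gadget the two channels sit at opposite ends, so the vertical line through $x_0$ meets $g$ in a connected block spanning \emph{at most two} consecutive boxes: a lower box $B$ and an upper box $B'$ joined by a single channel $C$ located at the end of the boxes that contains $x_0$ (for an edge-gadget, or a segment contained in one box, there is only one box). Thus the slab $S$ overlaps at most these two boxes and the channel between them, and every point of $S$ has $x$-coordinate in the common box width $[x_L,x_R]$.

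For the replacement I would take $s'$ to be the full-width horizontal segment $[x_L,x_R]$ placed at some height inside $B'$ if $S$ reaches into $B'$, and inside $B$ otherwise; this is a legal transmitter since a box is a solid rectangle of width $[x_L,x_R]$. Its $x$-range already covers every $x$-coordinate occurring in $S$, so it only remains to bound the boundary crossings of the vertical sight-segment from $s'$ to each point $p\in S$. Points of $S$ in the same box as $s'$ are reached with $0$ crossings; points in the channel $C$ are reached with $0$ crossings as well, since at the channel end the boxes and $C$ form one continuous vertical block; and a point in the other box that does not lie under $C$ is reached by leaving one box and entering the other, i.e.\ exactly $2$ crossings. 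Hence $s'$ sees all of $S$ and therefore guards at least as much as $s$.

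The step I expect to be the crux is this crossing count, and it is precisely where the gadget design pays off. Within the at-most-two-box span reachable by a single vertical segment there is only one inter-box gap at any fixed $x$ (the channel fills the gap solely at its own end), so the horizontal $2$-transmitter has exactly enough budget to jump it. This is also why the vertex-gadget uses three stacked boxes with channels at alternating ends and why $k=2$ is the relevant threshold: a vertical segment can never see across two gaps at once, and two crossings always suffice for the horizontal replacement to reach across the single gap it must span.
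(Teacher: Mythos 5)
Your proof is correct and is essentially the argument the paper has in mind: the paper gives no written proof for this observation (it is asserted as ``one can easily verify'' with a pointer to Fig.~1, whose caption makes exactly your point that the full-width horizontal transmitter's region contains everything the vertical one sees). Your write-up supplies the details correctly --- disjoint $y$-ranges and $y$-monotonicity confine a vertical transmitter's visibility to a slab inside at most two consecutive boxes plus one channel of a single gadget, and a full-width horizontal segment in one of those boxes covers that slab using at most the two boundary crossings needed to reach the adjacent box.
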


See also Fig.~\ref{fig:firstGadgets}.
We call the three boxes of a vertex-gadget
the {\em top, middle} and {\em bottom} box,
and also use {\em outer boxes} 
to mean the top and bottom box.

\begin{lemma}
\label{lem:standardForm}
\label{lem:transmitters_noEdge}
For any set $S$ of horizontal sliding 2-transmitters that guard $P'$ entirely, 
there exists a set 
$S'$ of horizontal sliding 2-transmitters that guard $P'$ entirely such that 
$\lvert S'\rvert\leq\lvert S\rvert$ and no sliding 2-transmitter of $S'$ is 
located in an edge-gadget.
\end{lemma}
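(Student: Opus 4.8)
The plan is to argue by a local replacement that pushes every transmitter lying in an edge-gadget into an adjacent outer box, after a preliminary widening step. I would first record, by a boundary-crossing count, exactly what a horizontal sliding $2$-transmitter $t$ contained in an edge-gadget $E_e$ can see. Writing $e=(u,v)$ with the vertex-gadget of $v$ lying above $E_e$ and that of $u$ below it, the edge-gadget sits directly below the bottom box of $v$ and directly above the top box of $u$. If $[a,b]$ is the $x$-range of $t$, then a vertical segment from $t$ going up leaves $E_e$ (one crossing) and enters $v$'s bottom box (second crossing), but cannot continue into $v$'s middle box without a third crossing; the downward direction is symmetric. Hence $t$ sees exactly three columns over $[a,b]$: the edge-gadget $E_e$ itself, the full-height column $R_v$ of $v$'s bottom box above it, and the full-height column $R_u$ of $u$'s top box below it, and nothing more.

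Next I would normalise $S$ by replacing each transmitter with the horizontal segment spanning the full width of the box, channel, or edge-gadget containing it. Widening a horizontal segment inside a rectangle keeps it inside $P'$ and can only enlarge the guarded region, so the normalised set still guards $P'$ and has the same cardinality. Now, for each transmitter $t$ still lying in an edge-gadget $E_e$ of $e=(u,v)$ as above, I would delete $t$ and insert a full-width transmitter $t'$ in $v$'s bottom box. A crossing count shows that $t'$ sees all of $v$'s bottom box (hence $R_v$) and, reaching one box downward, every edge-gadget attached below that box, in particular $E_e$. All transmitters removed in this way that come from down-edges of the \emph{same} vertex $v$ can be replaced by one full-width transmitter in $v$'s bottom box, so the number of inserted transmitters never exceeds the number deleted, giving $\lvert S'\rvert\le\lvert S\rvert$, and by construction no transmitter of $S'$ lies in an edge-gadget.

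The one region of $t$ that $t'$ fails to recover is the column $R_u$ inside $u$'s top box, and showing that $R_u$ stays guarded is the crux. Here I would use that edge-strips are thin and finite in number, so $u$'s top box contains a point $p$ whose $x$-coordinate lies under no edge-gadget and away from the channel positions. A crossing count shows that such a $p$ can be seen only from within $u$'s top box or $u$'s middle box: there is nothing above $p$ at its $x$-coordinate, and $u$'s bottom box and all edge-gadgets are at least three walls away. Thus the kept, non-edge-gadget transmitter of the normalised $S$ that guards $p$ lies in one of these two boxes, and after widening it spans the full box width; since $u$'s top box is entirely visible, across the two intervening walls, from a full-width transmitter in $u$'s middle box (and trivially from $u$'s top box), this transmitter guards all of $u$'s top box, in particular $R_u$. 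As $E_e$, $R_v$ and $R_u$ exhaust what $t$ saw, the set $S'$ guards $P'$; since the transmitters invoked to cover $R_u$ are never among those being deleted, the deletions can be carried out simultaneously without circularity. The main obstacle, as this indicates, is precisely the recovery of the far column $R_u$, and it is exactly for this that the widening step and the choice of a witness point lying under no edge-gadget are needed.
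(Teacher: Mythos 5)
Your opening crossing count (what a transmitter $t$ inside an edge-gadget sees: the column of the gadget itself, the column $R_v$ of the upper endpoint's bottom box, and the column $R_u$ of the lower endpoint's top box) agrees with the paper, and your replacement of $t$ by a full-width transmitter in $v$'s bottom box correctly recovers $E_e$ and $R_v$ without increasing the count. The gap is exactly where you say the crux is: the recovery of $R_u$. Your witness-point argument asserts that a point $p$ in $u$'s top box, chosen at an $x$-coordinate under no edge-gadget and away from the channels, has ``nothing above it'' and hence can only be guarded from $u$'s top or middle box. That is false in general: the first gadget vertically above $p$ can be the \emph{bottom box of another vertex-gadget} (for instance, the other endpoint of an up-edge of $u$, at $x$-coordinates inside that edge's strip but outside its edge-gadget, where the line of sight is unobstructed), and a transmitter there reaches $p$ with only two boundary crossings. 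Since every non-topmost vertex has a neighbour above, and $u$'s bar may lie entirely beneath other bars, there may be \emph{no} admissible $x$-coordinate at which nothing sits above $u$'s top box. In that situation the normalised $S$ can guard all of $u$'s top box except $R_u$ from above, guard $R_u$ only from the edge-gadget transmitter you are deleting, and your $S'$ then leaves $R_u$ unguarded.

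The paper closes this hole with a case distinction that your argument is missing. If some endpoint already has a transmitter in the outer box facing $e$, that transmitter (extended to span the box) covers the near column, and the edge-gadget transmitter is moved to the \emph{other} endpoint's facing outer box, covering the far column. If neither endpoint has such a transmitter, the paper picks its witness point just outside the $x$-range of the edge-gadget but still inside the strip of $e$: there the only candidate guard from above is the neighbour's facing outer box, which is excluded by the case hypothesis, so both endpoints are forced to carry middle-box transmitters, and a full-width middle-box transmitter sees the entire top and bottom boxes of its vertex, in particular $R_u$. Your proof would be repaired by adopting this same dichotomy; as written, the unconditional move into $v$'s bottom box cannot be justified.
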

\begin{proof}
Let $s\in S$ be a sliding 2-transmitter that lies in an edge-gadget $B$ corresponding 
to edge $e=(v,w)$. 
After possible renaming, assume that (the vertex-gadget corresponding to)
$v$ is below $e$ and $w$ is above $e$.

Assume first that one of $v,w$ (say $v$) has a horizontal sliding 
2-transmitter $s'$ in the outer box facing $e$.  After possibly extending $s'$
we may assume that it spans the entire outer box of $v$.
Since the $x$-range of
$B$ is within the $x$-range of $v$, $s'$ sees everything that $s$ saw and
that was below $s$.  So we can replace $s$
by a sliding 2-transmitter in the outer box of $w$ facing $e$, and
this  can only increase the guarded region.

So now assume that neither $v$ nor $w$ has a horizontal sliding 
2-transmitter
in the outer box facing $e$.  Consider a point $p$ in the top box of $v$
that is just outside the $x$-range of $B$, but still within the
$x$-range of $w$.  The only horizontal sliding 2-transmitters that could
guard $p$ are in the bottom box of $w$ or in the middle box of $v$.  By 
assumption we therefore have a sliding 2-transmitter
in the middle box of $v$.  Likewise $w$ must have a sliding 2-transmitter in the
middle box of $w$.  We can thus move the sliding 2-transmitter in $B$ to the 
bottom box of $w$ without decreasing the guarded region.
\end{proof}

\begin{lemma}
\label{lem:vertex_transmitters}
Let $S$ be a set of horizontal sliding 2-transmitters that guard $P'$ entirely
and that do not lie in edge-gadgets.
Then for any vertex $v$, there must be at least one sliding 2-transmitter
intersecting the vertex-gadget of $v$.  If there is exactly
one such sliding 2-transmitter, then it must be in the middle box of $v$.
\end{lemma}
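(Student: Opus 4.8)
The plan is to argue about what is needed to guard the three boxes of a vertex-gadget, using the geometry of the gadget and the structural assumption that transmitters avoid edge-gadgets. I would start by recalling the setup: each vertex-gadget consists of top, middle, and bottom boxes with the same $x$-range, connected by two thin channels at opposite ends forming a $Z$- or $S$-shape. Since no two vertex-gadgets share a $y$-range and edge-gadgets have disjoint $y$-ranges as well, a horizontal sliding $2$-transmitter that guards any point inside the boxes of $v$ must itself lie within the $y$-range of the vertex-gadget of $v$, hence inside one of its three boxes (the channels are too thin to carry a useful horizontal transmitter, or at least any horizontal transmitter sitting at a channel's $y$-coordinate can be analyzed as lying in the adjacent box's $y$-range). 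This immediately gives the first claim: to guard, say, the interior of the middle box, some transmitter must intersect the vertex-gadget, so at least one exists.

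\medskip

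For the second, sharper claim, I would suppose for contradiction that there is exactly one transmitter $s$ intersecting the vertex-gadget of $v$ and that $s$ lies in an outer box, say the top box; the bottom-box case is symmetric. The key geometric fact to exploit is how a horizontal $2$-transmitter in the top box sees across the gadget: a horizontal transmitter in the top box can see points in the middle box only by shooting perpendiculars (vertical rays) downward that cross at most two boundary edges, and because the $Z$/$S$-shape places the connecting channels at the ends, most of the middle box lies behind more than two walls when viewed from the top box. So I would exhibit a witness point $p$ in the middle box (chosen near the end opposite the channel connecting top and middle, or in the ``interior'' of the middle box away from the single channel) whose downward/upward perpendicular onto the horizontal line of any top-box transmitter crosses the boundary more than twice. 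Since by assumption there is no other transmitter in the vertex-gadget and (by hypothesis) none in edge-gadgets, and no transmitter from a neighbouring gadget can reach $p$ (disjoint $y$-ranges), $p$ is unguarded — a contradiction. Hence the unique transmitter cannot be in an outer box, so it must be in the middle box.

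\medskip

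The main obstacle I expect is the careful counting of boundary crossings for the witness point: I must verify that from a horizontal transmitter anywhere in an outer box, the vertical perpendicular onto the chosen point $p$ in the middle box genuinely crosses the polygon boundary at least three times (so more than $k=2$), accounting for the box walls and the channel geometry. This requires being precise about where the channels attach and choosing $p$ so that its vertical line does not pass through a channel (which is why the thinness of channels and the $Z$/$S$ placement matter). A secondary subtlety is confirming that a transmitter placed at the $y$-coordinate of a channel, rather than squarely inside a box, either cannot guard the required middle-box points or can be accounted as lying in one of the boxes; I would handle this by noting the channels are strictly within the $x$-range of the boxes and too thin to extend a useful horizontal segment, so any such transmitter sees strictly less than one placed in the spanning position inside a box. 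Once the witness point is correctly identified and its crossing count established, the contradiction and hence the lemma follow directly.
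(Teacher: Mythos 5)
Your proof of the first claim is essentially the paper's: pick a point in the middle box outside the $x$-range of the channels and observe that any horizontal $2$-transmitter seeing it must lie in one of the three boxes. That part is fine.

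The second part has a genuine error: your key geometric claim --- that ``most of the middle box lies behind more than two walls when viewed from the top box'' --- is false. A spanning horizontal transmitter in the top box sees the \emph{entire} middle box: the perpendicular from any middle-box point up to it crosses the top edge of the middle box and the bottom edge of the top box, i.e.\ exactly $2=k$ boundary points (or $0$ if the point lies under the channel). This is precisely why a single outer-box transmitter covers two of the three boxes (cf.\ the shaded region in Fig.~1), and it means no witness point in the middle box exists. The region an outer-box transmitter cannot reach is (most of) the \emph{opposite outer box}, where the perpendicular accumulates $4$ crossings; so the witness must be placed there. Moreover, once the witness sits in an outer box you can no longer dismiss external guards by saying neighbouring gadgets have ``disjoint $y$-ranges'' --- disjoint $y$-ranges are exactly what makes vertical visibility between gadgets possible (that is how edge-gadgets get guarded), so an edge-gadget or another gadget vertically aligned with that outer box is a real threat. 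The paper handles this by using the graph structure: if the unique transmitter is in the bottom box and $v$ has a neighbour $w$ above, it picks the witness in the top box \emph{within the $x$-range of the edge-gadget of $(v,w)$}, so the only candidate guards are the edge-gadget (excluded by hypothesis) and the top/middle boxes of $v$ (excluded by uniqueness); if $v$ has no neighbour above, the $2$-connectivity property of the visibility representation forces $v$ to be the topmost vertex, and then the top box is unguardable outright. Your proposal is missing both the correct witness location and this case analysis on the neighbours of $v$.
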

\begin{proof}
Pick a point $p$ in the middle box of $v$
that is not in the $x$-range of the channels. 
Let $s$ be a horizontal sliding 2-transmitter that guards $p$.  Then
$s$ must be in one of the three boxes of $v$.

Assume now that exactly one sliding 2-transmitter intersects the vertex-gadget of $v$,
and it is not in the middle box.  Say the sliding 2-transmitter is in the bottom box.  
If $v$ has any neighbour $w$ above, then let $p$ be a point in the top box
of $v$ and in the same
$x$-range as the edge-gadget of $(v,w)$.  To guard $p$, we need either a
sliding 2-transmitter in the edge-gadget (which was excluded) or in the top or
middle box of $v$ (which was also excluded).  So $v$ cannot have any neighbour
above.  By construction that means that $v$ is the topmost of all vertices.
To guard the top box of $v$, we then must have a sliding 2-transmitter in the top or
middle box of $v$; again contradiction.
\end{proof}

\begin{lemma}
\label{lem:nphard_disconnected}
The following statements are equivalent: \begin{inparaenum}[(i)] \item $G$ has a vertex cover of size $k$, \item $P'$ can be guarded by $n+k$
sliding 2-transmitters, and \item $P'$ can be guarded by $n+k$ horizontal sliding 2-transmitters. \end{inparaenum}
\end{lemma}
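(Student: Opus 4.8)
The plan is to prove the cyclic chain (i) $\Rightarrow$ (iii) $\Rightarrow$ (ii) $\Rightarrow$ (i); the middle implication is immediate, since every horizontal sliding 2-transmitter is in particular a sliding 2-transmitter.

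For (i) $\Rightarrow$ (iii), I fix a vertex cover $C$ with $|C|=k$ and build a guarding set as follows. For every vertex $v\notin C$ I place a single horizontal transmitter spanning the middle box of $v$: its vertical perpendiculars reach the top and bottom boxes with exactly two boundary crossings and traverse both channels without leaving $P'$, so it guards the entire vertex-gadget of $v$. For every vertex $v\in C$ I instead place two transmitters, one spanning the top box and one spanning the bottom box of $v$. Each of these also sees the middle box (two crossings), so the whole vertex-gadget of $v$ is guarded; and in addition the top-box (resp.\ bottom-box) transmitter sees, again with two crossings, every edge-gadget lying above (resp.\ below) $v$---that is, all edge-gadgets of edges incident to $v$. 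This uses $(n-k)+2k=n+k$ transmitters. Since $C$ is a vertex cover, every edge $e=(v,w)$ has an endpoint in $C$, whose facing outer-box transmitter guards the edge-gadget of $e$; hence all of $P'$ is guarded and (iii) holds.

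For (ii) $\Rightarrow$ (i), I start from a guarding set of $n+k$ sliding 2-transmitters. By Observation~\ref{obs:no_vertical} I may replace every vertical transmitter by a horizontal one without losing coverage or increasing the count, and then by Lemma~\ref{lem:transmitters_noEdge} I may assume that no transmitter lies in an edge-gadget; call the resulting horizontal guarding set $S'$, so $|S'|\le n+k$. Because the vertex-gadgets have pairwise disjoint $y$-ranges (and edge-gadgets have $y$-ranges disjoint from everything), each horizontal transmitter of $S'$ lies in exactly one vertex-gadget. Letting $n_v$ be the number of transmitters of $S'$ in the gadget of $v$, Lemma~\ref{lem:vertex_transmitters} gives $n_v\ge 1$ for every $v$. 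Setting $C=\{\,v\mid n_v\ge 2\,\}$ then yields $n+k\ge|S'|=\sum_v n_v\ge n+|C|$, so $|C|\le k$.

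It remains to show that $C$ is a vertex cover, which I expect to be the main obstacle. Suppose for contradiction that some edge $e=(v,w)$ (say $v$ below $w$) had $n_v=n_w=1$. Then by Lemma~\ref{lem:vertex_transmitters} the unique transmitter of each of $v,w$ sits in its middle box. The only transmitters that can guard the edge-gadget of $e$ are one inside the edge-gadget (excluded from $S'$) or one in the outer box of $v$ or $w$ facing $e$; a middle-box transmitter cannot do it, because its vertical perpendicular to the edge-gadget must pass through a solid outer box---recall that the edge-strip, and hence the edge-gadget, avoids the channel $x$-ranges---incurring four boundary crossings. Thus the edge-gadget of $e$ would be unguarded, a contradiction. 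Hence every edge has an endpoint in $C$, so $C$ is a vertex cover of size at most $k$, which establishes (i). The crux throughout is the interplay of the two preceding lemmas with the crossing-count geometry of the gadgets; once those are in hand, the equivalence is essentially a counting argument.
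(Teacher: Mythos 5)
Your proposal is correct and follows essentially the same route as the paper: the same placement of one middle-box transmitter per non-cover vertex and two outer-box transmitters per cover vertex for the forward direction, and the same normalization via Observation~\ref{obs:no_vertical} and Lemma~\ref{lem:transmitters_noEdge} followed by the counting argument with $C=\{v:n_v\ge 2\}$ for the converse. Your contrapositive argument that an edge with $n_v=n_w=1$ leaves its edge-gadget unguarded is just a slightly more explicit rendering of the paper's appeal to Lemma~\ref{lem:vertex_transmitters}.
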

\begin{proof}
Given a vertex cover $C$ of $G$, we place horizontal transmitters as follows:
If $v\in C$, then place a maximal horizontal sliding 2-transmitter in both outer boxes of $v$,
else place a maximal
horizontal sliding 2-transmitter in the middle box of $v$.  Clearly we have $n+|C|$
sliding 2-transmitters and every vertex-gadget is guarded.  For every edge $e$,
one endpoint $v$ is in $C$, and hence both bottom and top box of $v$
contain sliding 2-transmitters.  The one in the outer box of $v$ that
faces $e$ then guards the edge-gadget of $e$.

Vice versa, assume that set $S$ of sliding 2-transmitters guards $P'$.  By the above
results, we may assume that they are all horizontal and none are in an
edge-gadget.  Define $C$ to be all those vertices whose
vertex-gadgets are intersected by at least two sliding 2-transmitters.  Since every vertex-gadget
intersects at least one sliding 2-transmitter 
we have $|C|\leq |S|-n$.  
For every edge $(v,w)$,
the edge-gadget must be guarded by a sliding 2-transmitter that is in an outer box
of $v$ or $w$, say $v$.  Then $v$ must contain at least two sliding 2-transmitters by Lemma \ref{lem:vertex_transmitters},
so $v\in C$.  Hence $C$ is a vertex cover.
\end{proof}

\paragraph{Connecting the Polygon.}
Now we explain how to make the polygon connected while staying monotone.
Let $g_1,\dots,g_{m+n}$ be the gadgets in $P'$, sorted in bottom-to-top
order (since $y$-ranges
are disjoint, this is well-defined). The idea is to connect each $g_i$
to $g_{i+1}$ 
using a {\em connector-gadget}.  This is an $S$-shaped or
$Z$-shaped gadget much like a vertex-gadget, except that the top and
bottom box both add a zig-zag near the end.   
Also, one of the channels
has flexible height, so that the connector-gadget can have arbitrary
height.  We attach the ends of the connector-gadget $C$ to corners of
$g_i$ and $g_{i+1}$.
Fig.~\ref{fig:connectorGadget} shows how to do this if 
the $x$-range of $C$ is disjoint (except at the ends) from the ones
of $g_i$ and $g_{i+1}$, and the inset in 
Fig.~\ref{fig:subdividingAnEdge} shows how to do this if $C$
shares $x$-range with them (in case of which we push the
zig-zag to the very end to avoid overlap.)

\begin{figure}[t]
\centering
\includegraphics[width=0.70\textwidth, page=1]{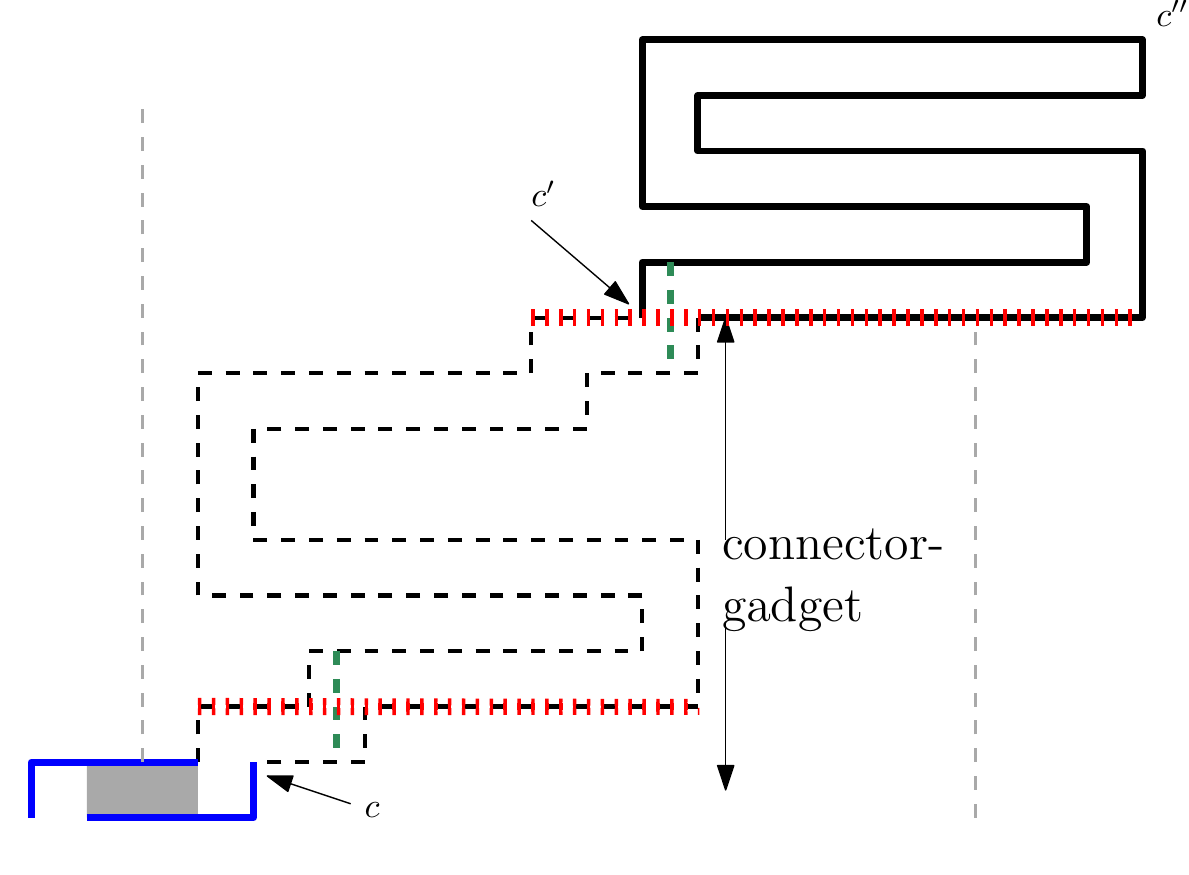}
\caption{Connecting an edge-gadget to a vertex-gadget if there is
no line of sight between them.  We again show how some vertical transmitters
can be replaced by horizontal transmitters.}
\label{fig:connectorGadget}%
\end{figure}

However, we cannot connect consecutive gadgets if the connector-gadget would
cross a line-of-sight.  To avoid doing this, we will subdivide edges.

\begin{observation}[Folklore] If $G^s$ results from graph $G$ by subdividing
one edge twice, then $G$ has a vertex cover of size $k$ if and only if $G^s$
has a vertex cover of size $k+1$.
\end{observation}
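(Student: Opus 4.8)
The plan is to argue directly from the structure of a double subdivision. Write $e=(u,v)$ for the subdivided edge; subdividing it twice introduces two fresh vertices $a,b$ and replaces the single edge $uv$ by the three edges $ua$, $ab$, $bv$ (i.e.\ by the path on $u,a,b,v$), while leaving every other edge of $G$ untouched. I would prove the two implications separately and phrase both in terms of covers of size \emph{at most} the stated bound, so that the statement really amounts to showing that the minimum vertex cover number increases by exactly one.

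For the forward direction, suppose $C$ is a vertex cover of $G$ with $|C|=k$. Since $C$ covers $e$, at least one endpoint, say $u$, lies in $C$. I claim $C'=C\cup\{b\}$ is a vertex cover of $G^s$: every edge of $G$ other than $uv$ is already covered by $C$, the edge $ua$ is covered by $u$, and both $ab$ and $bv$ are covered by $b$. As $b\notin C$, this gives $|C'|=k+1$, so $G^s$ has a vertex cover of size $k+1$.

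For the converse, suppose $C'$ is a vertex cover of $G^s$ with $|C'|=k+1$. The key observation is that the middle edge $ab$ can only be covered by $a$ or $b$, so $C'\cap\{a,b\}\neq\emptyset$, and a short case analysis recovers a cover of $G$. If both $a,b\in C'$, then $(C'\setminus\{a,b\})\cup\{u\}$ covers $G$ (the edge $uv$ is covered by $u$, and every other $G$-edge by $C'$, which is unaffected since it avoids $a,b$) and has size at most $k$. If exactly one of them, say $a$, lies in $C'$, then $bv$ forces $v\in C'$; now $C'\setminus\{a\}$ covers $G$ with $uv$ covered by $v$, and has size $k$. The symmetric case $b\in C'$, $a\notin C'$ forces $u\in C'$, and $C'\setminus\{b\}$ works. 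In every case $G$ has a vertex cover of size at most $k$, hence of size exactly $k$ (padding with arbitrary vertices if needed).

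There is no deep obstacle here—this is a folklore fact—so the only thing to get right is the bookkeeping. Concretely, the exchange in the converse removes the subdivision vertices (or one of them together with the forced endpoint) without uncovering any untouched edge of $G$; this works \emph{precisely} because $a$ and $b$ are incident only to the three new edges $ua,ab,bv$ and to no original edge of $G$. Tracking the change in cardinality through each case then yields the claimed $k\leftrightarrow k+1$ correspondence.
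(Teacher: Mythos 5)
Your proof is correct and complete: the forward direction (add the subdivision vertex not adjacent to the chosen endpoint of the cover) and the case analysis on $C'\cap\{a,b\}$ in the converse are exactly the standard argument, and you correctly use the fact that $a$ and $b$ are incident only to the three new edges. The paper labels this observation as folklore and supplies no proof at all, so there is nothing to compare against; your write-up simply fills in the omitted (routine) details.
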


We proceed as follows. First ``parse'' the 
bottommost gadget $g_1$: use an $S$-shape for it
and fix as current corner its top right corner.  Assume now we have
parsed gadget $g_i$ already, and fixed one top corner $c$ of it as
current corner.  Let $g_{i+1}$ be the next gadget above $g_i$.  Considering
its two bottom corners, we choose the corner $c'$ so that
$\overline{cc'}$ crosses as few lines-of-sight as possible.

If line segment $\overline{cc'}$ crosses no line-of-sight, then 
attach a connector-gadget between $c$ and $c'$, 
using as shape (i.e., $S$ or $Z$) the one that has $c$ and $c'$ at its ends.
Let $c''$ be
the diagonally opposite corner from $c'$ in gadget $g_{i+1}$ and
(if $g_{i+1}$ is a vertex-gadget) 
use as shape (i.e., $S$ or $Z$) for it the one that has $c'$ and $c''$ at its ends.
This finishes (in this case) the parsing of gadget $g_{i+1}$,
and we continue to connect to the next gadget with current corner $c''$.

Now assume that $\overline{cc'}$ crosses some lines-of-sight, say
$l_1,\dots,l_\ell$ in order from $c$ to $c'$.  
For all $j$, line-of-sight 
$l_j$ represents an edge $e_j$; subdivide $e_j$
twice.  This adds two new vertex-gadgets and two new edge-gadgets
that we place along $l_j$, in the $y$-range between $g_i$ and
$g_{i+1}$.  We make their height small enough and move 
them up and down suitably (while staying
between $g_i$ and $g_{i+1}$), so that all their $y$-ranges are
disjoint and the ones of $l_j$ are below the ones of $l_{j+1}$ for all $j$.

All these gadgets can be connected with line segments that do
not cross a line-of-sight.  See 
Fig.~\ref{fig:subdividingAnEdge}.  We can hence connect all these gadgets 
as explained above.  The only difference is that the next current corner 
$c''$ must be chosen to be the end of the line segment connecting to
the next gadget.  Normally $c''$ will
again be diagonally opposite from the previous corner $c'$, 
but there is one exception per
set of gadgets added for subdivisions. With that we have connected to $g_{i+1}$,
and we repeat from there (after choosing its shape and
the current corner as before).

\begin{figure}[t]
\centering
\includegraphics[width=0.75\textwidth, page=2]{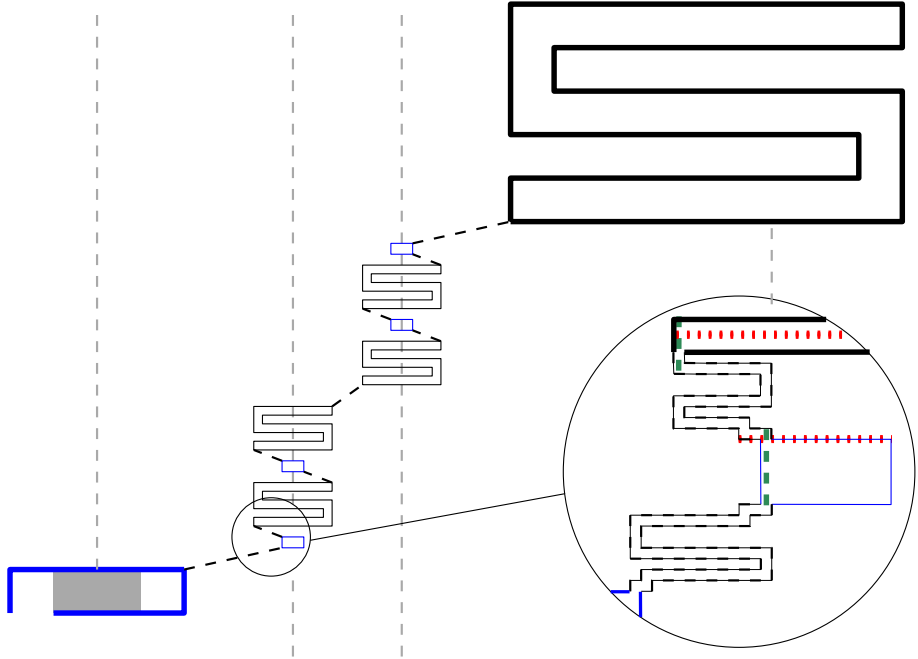}
\caption{Connecting an edge-gadget to a vertex-gadget if there are
lines of sight between them.}
\label{fig:subdividingAnEdge}%
\end{figure}

\paragraph{Reduction Revisited.}
With the addition of connector-gadgets, Observation~\ref{obs:no_vertical}
(vertical transmitters can be replaced by horizontals) is not as obvious 
anymore, but still holds as long as sliding $k$-transmitters may run along
polygon-edges. See Fig.~\ref{fig:connectorGadget}.  
With this, Lemma~\ref{lem:transmitters_noEdge}, 
Lemma~\ref{lem:vertex_transmitters}, and the equivalent of
Lemma~\ref{lem:vertex_transmitters} for connector-gadgets, 
also hold.
Let $N_s$ be the total number of subdivisions that we did over all
connecting of all gadgets ($N_s$ is even), and let $G'$ be the graph that 
results.  We started with $n+m\in O(n)$ vertex-gadgets and edge-gadgets
and $2m\in O(n)$ lines-of-sight.  Connecting two of these gadgets hence
creates $O(n)$ subdivisions, and therefore $N_s\in O(n^2)$ is polynomial.
After all subdivisions we have $n+m+2N_s$ gadgets, and hence use
$N_c:=n+m+2N_s-1$ connector-gadgets to connect all of them into one
polygon $P$.  So the construction is polynomial.  $G$ has a vertex cover
of size $k$ if and only if $G'$ has a vertex cover of size $k':=k+N_s/2$
if and only if $P$ can be guarded with $k'+n+N_c$ horizontal sliding
$2$-transmitters.

With that, the reduction is complete for $k=2$.
Note that the constructed polygon is connected and $y$-monotone (and in 
particular therefore simple).

\subsection{$k$-Transmitters for $k>2$}
We now generalize to sliding $k$-transmitters for any fixed $k>0$.
The reduction is exactly the same as before, with the exception of
the definition of vertex-gadgets and connector-gadgets.

The vertex-gadget now consists of 
$k+1$ copies of the thickened bar in the visibility representation
(earlier we had $3=2+1$ copies).  They are connected
with $k$ channels at alternate ends, resulting in a zig-zag line.  The
connector-gadget is a vertex-gadget with additional small zig-zags in
the top and bottom box (possibly pushed towards the end.)
See Fig.~\ref{fig:gadgetsForK}.

\begin{figure}[t]
\centering
\includegraphics[width=0.6\linewidth]{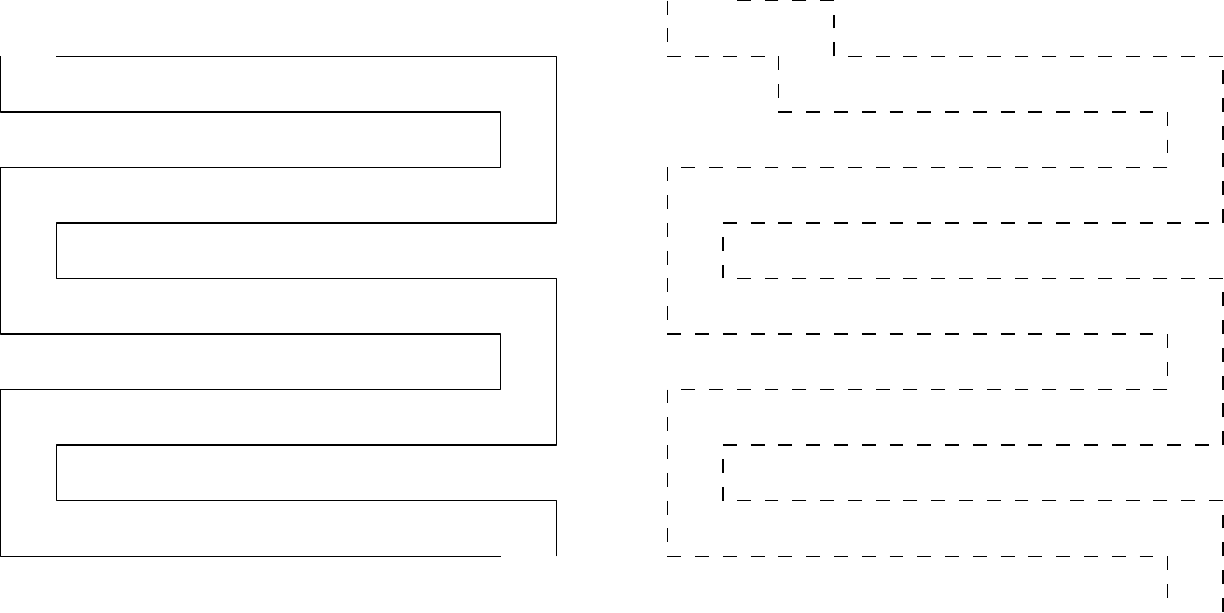}
\caption{Vertex- and connector-gadget for $k=4$.}
\label{fig:gadgetsForK}%
\end{figure}

We can verify that again vertical sliding $k$-transmitters are never
better than horizontal ones.  Define for a vertex-gadget the {\em middle}
box to be the $(k/2{+}1)$st box (recall that $k$ is even), and the
{\em outer boxes} to be the top and bottom box as before.  With that,
the proofs of
Lemmas~\ref{lem:transmitters_noEdge} and \ref{lem:vertex_transmitters} 
carry almost verbatim, and the reduction holds again.  We conclude:

\begin{theorem}
\label{thm:mainForKs}
For any $k>0$,
guarding a polygon with the minimum set of sliding $k$-transmitters
is \textsc{NP}-complete, even if (i) the polygon is a simple $y$-monotone 
orthogonal polygon, and (ii) only horizontal sliding $k$-transmitters are allowed.
\end{theorem}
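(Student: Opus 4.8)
The plan is to prove both membership in \textsc{NP} and \textsc{NP}-hardness. For membership it suffices to bound the number of transmitters and to discretize their positions. The region that a \emph{maximal} horizontal sliding $k$-transmitter guards changes only as its supporting line sweeps past one of the horizontal edge-coordinates of $P$, so up to guarded region there are only polynomially many maximal horizontal (and, symmetrically, vertical) transmitters, each placeable at rational coordinates; since extending a transmitter only enlarges its guarded region, an optimum never needs more than this polynomial number, so we may assume the target count $t$ is polynomial and the certificate simply lists $t$ canonical transmitters. Verifying that a guessed set guards $P$ then amounts to checking, for a representative point of each face of the arrangement of the polygon edges together with the transmitter lines, that some transmitter sees it; by the definition of a sliding $k$-transmitter each such check is a single crossing count along an axis-parallel segment, so verification runs in polynomial time.

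For \textsc{NP}-hardness I would reuse the reduction from Minimum Vertex Cover of Section~\ref{sec:npHardForTwo}. The case $k=2$ is complete, so it remains to treat even $k>2$ with the $(k{+}1)$-box vertex- and connector-gadgets and then to reduce odd $k$ to the even case. The heart of the matter is to re-establish the analogues of Observation~\ref{obs:no_vertical}, Lemma~\ref{lem:transmitters_noEdge} and Lemma~\ref{lem:vertex_transmitters} by a crossing count. Number the boxes $1,\dots,k{+}1$ from bottom to top, so the middle box is box $k/2{+}1$. At an $x$-coordinate outside the $x$-range of every channel, a vertical segment joining a point in box $i$ to a point in box $j$ meets $\partial P$ exactly $2|i-j|$ times, so one box sees another iff $|i-j|\le k/2$. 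The three structural facts follow: a transmitter in the middle box reaches every box (maximal distance $k/2$, i.e. $k$ crossings) and so alone guards a whole vertex-gadget; a transmitter in an outer box reaches only the $k/2{+}1$ boxes of its half; and an edge-gadget above $v$ is seen from the top box of $v$ in just $2$ crossings but from the middle box in $k{+}2>k$ crossings, so it can be guarded only from the adjacent outer box. These are exactly the properties on which Lemmas~\ref{lem:transmitters_noEdge}--\ref{lem:nphard_disconnected} depend, so those proofs, the connector-gadget argument, and the subdivision bookkeeping of the ``Reduction Revisited'' paragraph carry over, giving the same correspondence between vertex covers and guarding sets of the appropriate size. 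Finally, since any segment between two interior points of $P$ meets $\partial P$ an even number of times, a sliding $k$-transmitter and a sliding $(k{-}1)$-transmitter coincide, so hardness for the even value $k{-}1$ yields hardness for every odd $k$.

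The step I expect to need the most care is the crossing count behind the three structural facts, and in particular the choice of test points in the lower-bound arguments. Because a horizontal transmitter $s$ guards a point $p=(x_p,y_p)$ along the vertical perpendicular dropped at $x=x_p$, the relevant crossing count depends only on $x_p$ and not on where $s$ lies horizontally; it therefore suffices to take every test point $p$ at an $x$-coordinate outside the $x$-range of all channels, exactly as the proof of Lemma~\ref{lem:vertex_transmitters} already does. At such an $x$ the count is the full $2|i-j|$, making the bound $|i-j|\le k/2$ exact and the three facts rigorous. I would still have to verify the companion statement for connector-gadgets — that each connector forces its own transmitter and cannot be guarded cheaply from a neighbour, for which the top/bottom zig-zags must contribute enough crossings (deepening or repeating them with $k$ if necessary) — and recheck that, with connectors present, vertical transmitters can be replaced by horizontal ones, which (as noted after Observation~\ref{obs:no_vertical}) requires allowing transmitters to run along polygon edges.
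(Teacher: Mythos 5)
Your proposal follows essentially the same route as the paper: the same vertex-cover reduction with $(k{+}1)$-box zig-zag gadgets, re-establishing Observation~\ref{obs:no_vertical} and Lemmas~\ref{lem:transmitters_noEdge}--\ref{lem:vertex_transmitters} for general even $k$, and your crossing count $2|i-j|$ (hence visibility iff $|i-j|\le k/2$) is exactly the arithmetic the paper relies on when it says those proofs ``carry almost verbatim.'' You in fact supply details the paper leaves implicit --- the \textsc{NP}-membership certificate, the reduction of odd $k$ to even $k{-}1$ via parity of boundary crossings, and the explicit box-distance calculation --- so the argument is sound and matches the paper's.
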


Notice that every gadget is a {\em thickened path} obtained by sliding
a unit square along an orthogonal path.   With suitable rescaling,
in fact the entire polygon can be made into a thickened path, 
with one exception:  Whenever we subdivide edges, we must (at one edge-gadget)
attach both connecting gadgets on the same (left or right) side, hence have a
``leg'' sticking out.  (This could perhaps be called a {\em thickened
caterpillar}.)  We suspect that the construction could be modified to become a thickened path, but have not been able to work out
the details yet.

\section{An $O(1)$-Approximation Algorithm}
\label{sec:approximation}
In this section, we give an $O(1)$-approximation algorithm for the sliding $k$-transmitter problem, using as key ingredient an $O(1)$-approximation developed in \cite{BiedlCLMMV} for a certain hitting problem among segments. The main difference between our approach and the one in \cite{BiedlCLMMV} is that we need to define the segments differently so that we encapture that guards can see through $k$ walls.

Let $P$ denote an orthogonal polygon with $n$ vertices, and let $\delta P$ denote the boundary of $P$. We first compute a subdivision of $P$ and then define sets of orthogonal line segments in $P$ which will be used to define the hitting set problem.

\paragraph{Slices.}
Define \emph{horizontal partition-segments} as
follows: Start with a horizontal edge $e$.  Expand $e$ leftwards until
we hit a vertical edge of the polygon, coming from the strict inside
of $P$, for the $(k/2)$th time. Likewise expand $e$ rightwards.  
If there are not enough such intersections, then stop at the last one.
See Fig.~\ref{fig:segments}.

\begin{figure}[t]
\centering
\hspace*{\fill}
\includegraphics[width=.30\textwidth,page=1]{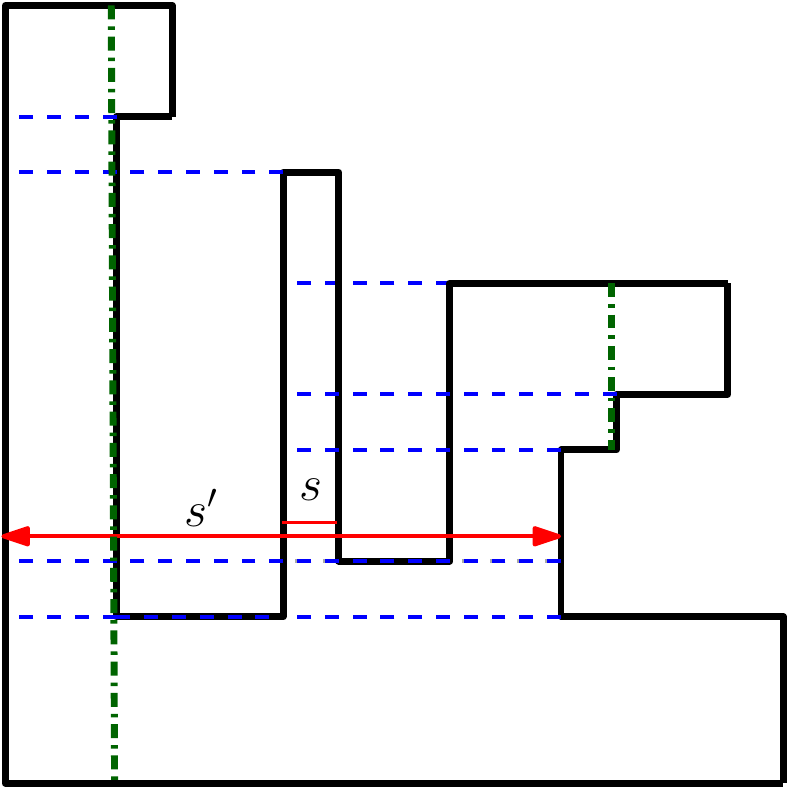}
\hspace*{\fill}
\caption{Horizontal partition-segments (blue dashed), 
one horizontal slice-segment (red solid) and two vertical
guard-segments (green dot-dashed) for $k=2$.}
\label{fig:segments}
\end{figure}

The horizontal partition-segments split the interior of the polygon into
rectangles that we call {\em horizontal slices}.
Since any edge gives rise to one partition-segment, and any partition-segment
intersects $O(k)$ vertical edges, 
we have $O(kn)$ horizontal slices.

The following lemma argues that this partitioning is ``correct'' in the
sense that any transmitter either guards all or nothing of the interior of
a slice.

\begin{lemma}
\label{lem:seeingBEntirely}
Let $\sigma$ be a horizontal slice and let $c$ be a point in its interior.  
If a maximal vertical sliding $k$-transmitter $g$ sees $c$, 
then it sees all points of $\sigma$.
\end{lemma}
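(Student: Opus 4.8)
The plan is to fix coordinates with $g$ on the vertical line $x=x_g$ and maximal, so that both endpoints of $g$ lie on horizontal edges of $P$, and to write $\sigma=[x_\ell,x_r]\times[y_b,y_t]$ and $c=(c_x,c_y)$ with $y_b<c_y<y_t$. Saying that $g$ sees $c$ means two things: $c_y$ lies in the $y$-extent of $g$, and the horizontal segment from $c$ to $(x_g,c_y)$ meets $\delta P$ at most $k$ times. I must establish both facts for an arbitrary interior point $p=(p_x,p_y)$ of $\sigma$ and then pass to the closure; concretely this splits into showing (i) the whole interval $[y_b,y_t]$ is contained in the $y$-extent of $g$, and (ii) the horizontal segment from $p$ to $(x_g,p_y)$ meets $\delta P$ at most $k$ times.

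The first move is to eliminate the horizontal coordinate. Because $\sigma$ is a slice, its interior contains no point of $\delta P$; hence for a fixed height $y\in(y_b,y_t)$ the horizontal segment from $(p_x,y)$ to $(x_g,y)$ meets $\delta P$ in the same number of points for every $p_x\in(x_\ell,x_r)$, since two such segments differ only by a sub-segment lying inside $\sigma$, which crosses nothing. Thus the number of crossings is a function $N(y)$ of the height alone, and we are given $N(c_y)\le k$. For (ii) it therefore suffices to prove that $N$ is constant, or at least bounded by $k$, on $(y_b,y_t)$.

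The heart of the argument is that $N$ cannot change inside the slice. The value $N(y)$ can only change at a height $y^{\ast}$ where the sweeping horizontal segment passes an endpoint of a vertical edge of $P$ lying strictly between $\sigma$ and $x_g$, that is, at a corner of $P$ at height $y^{\ast}\in(y_b,y_t)$ strictly between the slice and the transmitter. Such a corner carries a horizontal edge, and by construction that horizontal edge emits a partition-segment at height $y^{\ast}$. I would argue that this partition-segment must reach into the open interval $(x_\ell,x_r)$, which is impossible, since a partition-segment through the interior of $\sigma$ would split it and contradict $\sigma$ being a single slice. The engine here is the $(k/2)$-extension rule together with the bound $N(c_y)\le k$ and the fact that crossings of $\delta P$ along a horizontal segment alternate between leaving and re-entering $P$: the rule is calibrated precisely so that a corner close enough (in number of intervening walls) to alter $N$ emits a partition-segment that does penetrate $\sigma$. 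The same reasoning applied to the horizontal edge carrying an endpoint of the maximal transmitter $g$ shows that $g$ cannot terminate at a height strictly inside $(y_b,y_t)$ — otherwise either that endpoint lies in the interior of $\sigma$ (if $x_g\in(x_\ell,x_r)$) or its partition-segment cuts $\sigma$ — which settles (i). Combining, $N\equiv N(c_y)\le k$ on $(y_b,y_t)$, so $g$ sees every interior point of $\sigma$, and hence all of $\sigma$.

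I expect the main obstacle to be exactly this bookkeeping in the core step. One has to track the direction of $g$ relative to $\sigma$ (to its left, to its right, or sharing its $x$-column), split the at-most-$k$ crossings correctly between the two sides of an intervening corner using the in/out alternation, and verify that the partition-segment emitted by that corner genuinely enters $(x_\ell,x_r)$ rather than halting one wall short. Making the notion ``within $k/2$ walls of $\sigma$'' precise at a height $y^{\ast}\neq c_y$, where the set of crossed edges may differ from the set crossed at $c_y$, is the delicate point, and is where the choice of $k/2$ (rather than some other fraction of $k$) is essential.
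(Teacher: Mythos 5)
Your overall strategy is the same as the paper's: the paper takes the perpendicular $\overline{cq}$, extends it across the $x$-range of $\sigma$, and sweeps it vertically, arguing that the first horizontal edge of $P$ encountered would emit a partition-segment containing the swept segment and hence splitting $\sigma$. That is exactly your ``$N(y)$ can change only at a corner, and such a corner's partition-segment would cut $\sigma$'' step, and it also subsumes your point (i) about the endpoints of the maximal transmitter $g$ (an endpoint of $g$ lies on a horizontal edge, which is one of the corners your sweep can meet). So the reduction to a height function $N(y)$, the use of in/out alternation, and the intended contradiction are all the right ideas.

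The gap is that the one step carrying all the content --- that the partition-segment emitted by an obstructing corner really reaches into $(x_\ell,x_r)$ --- is only announced (``I would argue that $\ldots$''), and you yourself defer the $k/2$ bookkeeping as ``the delicate point.'' This is not a cosmetic omission, because with the extension rule taken literally (stop at the $(k/2)$th vertical edge hit from the strict inside) the claim fails. For $k=2$ let $P=[0,10]^2$ minus the holes $[4,6]\times[2,8]$ and $[7,9]\times[5,6]$. Then $\sigma=[0,4]\times[2,8]$ is a single horizontal slice, the maximal vertical transmitter $g=\{8\}\times[0,5]$ sees $c=(2,3)$ (two crossings, at $x=4$ and $x=6$), yet $g$ does not see $(2,7)\in\sigma$; and the partition-segment emitted by the edge $y=5$, $x\in[7,9]$ stops at $x=6$, its first inside-hit, so it never enters $\sigma$ and no contradiction arises. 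The alternation count shows why: a corner $k$ crossings away from $\sigma$ must clear $k/2$ inside-hits and $k/2$ outside-hits just to reach $\sigma$, and then one more inside-hit to traverse it, so the extension rule has to run to the $(k/2+1)$st inside-hit (as it implicitly does for sliding cameras, $k=0$) for your argument --- and the paper's --- to fire. To complete your proof you must either adjust the extension rule accordingly or actually carry out the count showing the partition-segment clears every intervening wall; as written, the proposal asserts precisely the statement that needs proving, at the one place where it can break.
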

\begin{proof}
Let $q$ be the point on $g$ where the perpendicular 
from $c$ onto $g$ ends. 
By assumption the horizontal line segment $\overline{cq}$ intersects the 
boundary of $P$ at most $k$ 
times.  Expand $\overline{cq}$ until it spans the $x$-range of $\sigma$; 
this cannot add crossings since $\sigma \subseteq P$.  
Now sweep the resulting segment $s'$ upward until we hit either the top
side of $\sigma$ or a horizontal edge of $P$.  Say we hit an edge $e$ first.
We created a partition-segment from $e$, which extends in both directions
until it hits at most $(k/2)$ vertical edges from the inside, hence at
most $k$ vertical edges.  This partition segment contains the entire
(translated) $s'$ and splits $B$, so we have reached the top side of $B$.

So we can sweep the region of $B$ above $s'$ without encountering
new edges of $P$, which shows that $g$ guards all of this.  Likewise
we can sweep downward until the bottom side of $B$, and so $g$ guards all
of $B$.
\end{proof}

\paragraph{Slice-segments. } We now assign a segment to each
horizontal slice that captures ``being guarded''.  For any horizontal
slice $\sigma$, let $s$ be a horizontal segment strictly inside $\sigma$.
Extend $s$ (much like we did for partition-segments) to both sides until
it hits a vertical edge from the inside for the $(k/2)$th time.
We call the resulting segment $s'$ the \emph{slice-segment} of $\sigma$.
See also Fig.~\ref{fig:segments}.  

We define vertical slices of $P$
and vertical slice-segments in an analogous fashion.  
There are $O(kn)$ slice-segments
since there are $O(kn)$ slices.

\paragraph{Guard-segments $\Gamma$.} Our definition of sliding
$k$-transmitters allowed any horizontal or vertical segment to be used as
such.  We now describe a finite set of sliding $k$-transmitters  and argue
that these suffice.  Let $s$ be a horizontal edge of $P$.  Define
a sliding $k$-transmitter $s'$ obtained by extending $s$ until we hit an
interior point of a vertical edge of $\delta P$.  (If some vertices are
aligned, then $s'$ may run along multiple horizontal edges of $P$.)
The resulting segments are
the {\em horizontal guard-segments} $\Gamma_H$.  Define {\em vertical
guard-segments} $\Gamma_V$ similarly, and set $\Gamma=\Gamma_H\cup \Gamma_V$
to be the guard-segments.  
We have at most $n$ guard segments (one per edge).

\paragraph{Crosses $X$.}  Let a {\em pixel} be any
rectangle that has the form $\sigma_H\cap \sigma_V$ for a horizontal
slice $\sigma_H$ and vertical slice $\sigma_V$.  Let $c$ be the point
where the slice-segments $s_H,s_V$ corresponding to $\sigma_H$ and $\sigma_V$
intersect; we call $c$ a {\em cross}, and say that $s_H$ and $s_V$ 
\emph{support} $c$.   Note that $c$ is in the interior of the pixel
since slice-segments were defined using segments strictly in the interior
of the slice.
We denote the set of crosses by $X$.  

We say that a cross $c$ is {\em hit} by a guard-segment $g$ if $g$ intersects 
the supporting slice-segment of $c$ that is perpendicular to $g$.
We now show that reducing the problem to just crosses and guard-segments is
enough.

\begin{lemma}
\label{lem:problemReduction}
A set $S$ of $m$ sliding $k$-transmitters guards $P$ if and only if there exists a set $S'\subseteq \Gamma$ of $m$ guard-segments such that every cross $c$ is hit by some guard-segment $\gamma\in S'$.
\end{lemma}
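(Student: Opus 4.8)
The plan is to prove this biconditional by establishing two directions, using the slice-decomposition lemmas already in hand. The key conceptual point is that guarding $P$ (a continuous condition over infinitely many points) is equivalent to guarding one representative point per pixel, and that pixels correspond exactly to crosses. I would first observe that it suffices to consider only maximal sliding $k$-transmitters whose endpoints lie on vertical edges (for horizontal transmitters) or horizontal edges (for vertical ones); extending a transmitter to be maximal never shrinks its guarded region, and by the definition of guard-segments, every such maximal transmitter is exactly an element of $\Gamma$ up to the irrelevant choice of where along its slice it sits. This reduces arbitrary transmitters to guard-segments and is where Observation~\ref{obs:no_vertical}-style extension reasoning gets reused.

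For the forward direction, suppose $S$ guards $P$. First replace each transmitter in $S$ by a maximal guard-segment in $\Gamma$ guarding at least as much, obtaining $S'\subseteq\Gamma$ with $|S'|\le m$ (padding back up to $m$ if needed). Now take any cross $c$, supported by $s_H,s_V$ and sitting in the interior of pixel $\sigma_H\cap\sigma_V$. Since $S$ (hence $S'$) guards all of $P$, it guards the point $c$; let $g$ be the guard-segment seeing $c$. I would argue that $g$ hits $c$ in the sense required: if $g$ is horizontal, then $g$ seeing $c$ means the vertical perpendicular from $c$ to $g$ crosses $\delta P$ at most $k$ times, and I must show this forces $g$ to intersect the vertical slice-segment $s_V$. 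Here I use Lemma~\ref{lem:seeingBEntirely}: $g$ sees the whole vertical slice $\sigma_V$, and because $s_V$ was defined by extending a segment inside $\sigma_V$ until hitting $k/2$ vertical edges on each side, $g$ (being maximal and reaching $c$'s column) must cross $s_V$. The symmetric argument handles vertical $g$.

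For the reverse direction, suppose $S'\subseteq\Gamma$ with $|S'|=m$ hits every cross. I want to show $S'$ guards all of $P$. Take an arbitrary point $p\in P$; it lies in some pixel $\sigma_H\cap\sigma_V$ with cross $c$. Some $g\in S'$ hits $c$, say $g$ is horizontal and intersects $s_V$. Then $g$ sees $c$ (the perpendicular from $c$ to $g$ lies within the shared slice structure and crosses $\delta P$ at most $k$ times, since $s_V$ was capped at $k/2$ crossings per side), so by Lemma~\ref{lem:seeingBEntirely} $g$ sees the entire vertical slice $\sigma_V$, and in particular the point $p$. Thus every point of $P$ is guarded.

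The main obstacle I anticipate is the crossing-count bookkeeping in both directions: I must argue carefully that ``$g$ intersects the perpendicular slice-segment of $c$'' is genuinely equivalent to ``$g$ sees $c$,'' which hinges on the fact that slice-segments are truncated at exactly $k/2$ interior vertical-edge crossings on each side. The subtlety is that a guard-segment $g$ might pass over the \emph{pixel} yet cross $\delta P$ more than $k$ times before reaching $c$ if the slice and guard structures are misaligned; Lemma~\ref{lem:seeingBEntirely} is precisely what rules this out, so the proof must invoke it cleanly rather than re-deriving the count. I would make sure each direction pins down which slice-segment ($s_H$ or $s_V$) is the relevant one for a given orientation of $g$, since that pairing is what the definition of ``hit'' encodes.
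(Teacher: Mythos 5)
Your overall architecture matches the paper's: convert each transmitter to a guard-segment, then handle both directions via the pixel/cross correspondence. Your reverse direction is fine and is essentially the paper's (a guard-segment $g$ meeting $s_V$ at a point $q$ sees $c$ because $\overline{cq}\subseteq s_V$ has at most $k$ boundary crossings, and Lemma~\ref{lem:seeingBEntirely} then upgrades ``sees $c$'' to ``sees the whole slice,'' hence sees $p$). The gap is in the forward direction, at precisely the step you flagged as the main obstacle. You argue: $g$ sees $c$, so by Lemma~\ref{lem:seeingBEntirely} $g$ sees all of $\sigma_V$, and therefore $g$ ``must cross $s_V$.'' That last inference is a non sequitur. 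Lemma~\ref{lem:seeingBEntirely} is a statement about which points $g$ sees; it says nothing about whether the height of $g$ lies in the $y$-range of the segment $s_V$, which extends beyond the slice $\sigma_V$ and has specific endpoints determined by the $(k/2)$-crossing rule. Seeing every point of $\sigma_V$ is a priori compatible with $g$ sitting at a height outside $s_V$'s span, in which case $g$ would not ``hit'' $c$. The fact you actually need is the one the paper derives directly from the definitions: the perpendicular $\overline{gc}$ from $c$ to $g$ crosses $\delta P$ at most $k$ times, and the supporting slice-segment $s_V$ is by construction extended from $c$ exactly until the $(k/2)$th crossing from the inside in each direction, so $\overline{gc}$ is contained in $s_V$; hence the foot of the perpendicular is a point of $s_V\cap g$, i.e., $g$ hits $c$. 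In short, the forward direction is a direct crossing-count comparison between the visibility definition and the slice-segment definition; Lemma~\ref{lem:seeingBEntirely} is the tool for the reverse direction only.

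A secondary looseness: your reduction of arbitrary transmitters to elements of $\Gamma$ dismisses the repositioning as ``the irrelevant choice of where along its slice it sits.'' It is not irrelevant: $\Gamma$ contains only the $O(n)$ extensions of polygon edges, so a maximal segment floating in the middle of a large rectangle is not in $\Gamma$, and translating a transmitter does change its visibility region in general. The paper translates each transmitter in parallel until it first reaches $\delta P$ and only then extends it to be maximal, arguing that both operations can only enlarge the guarded region; your write-up should perform and justify that translation step rather than absorb it into ``maximality.''
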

\begin{proof}
$(\Rightarrow)$ Suppose that we have a set $S$ of $m$ sliding $k$-transmitters that guards $P$ entirely.   Fix one sliding $k$-transmitter $s$.
Translate $s$ in parallel (i.e., move it horizontally if $s$ is vertical, move it vertically if $s$ is horizontal) until we reach $\delta P$.
Thus $s$ is now intersecting an edge of $P$.  Extend
$s$ so that it is maximal while still within $P$.  
Both operations can only increase the region seen.  The resulting segment $s'$ is a guard-segment.  After doing this to all sliding $k$-transmitters, we now have a set of guard-segments $S'$
that sees all of $P$.
Now consider any cross $c\in X$. Since $c$ is a point in $P$, it is guarded by some guard-segment $\gamma\in S'$. Thus, there exists a point $g\in \gamma$ such that the line segment $gc$ is normal to $\gamma$ and intersects $\delta P$ in at most $k$ points. But, $gc$ is part of the slice-segment that supports $c$ and is perpendicular to $\gamma$.  So $g$ is the intersection point between that slice-segment and guard-segment $\gamma$.

$(\Leftarrow)$ 
This is straightforward by Lemma~\ref{lem:seeingBEntirely} since
\begin{inparaenum}[(i)] 
\item any point in $P$ belong to at least one pixel,
\item there is a 1-to-1 correspondence between the pixels and crosses of $P$, 
and \item crosses are interior points of pixels. 
\end{inparaenum}
\end{proof}

Our problem has now been discretized as follows:  Given the set $X$ of
all crosses, each supported by two line segments, find a 
subset $S\subseteq \Gamma$ such that for every cross one of the two
line segments is intersected by at least one guard-segment in $S$.  
We call this the {\em cross-hitting} problem.
This problem is {\em exactly} the same problem as
studied in \cite{BiedlCLMMV} when solving the sliding-cameras problem
(the only difference is in the choice of 
supporting segments of crosses, which are longer for sliding $k$-transmitters).
They give an $O(1)$-approximation algorithm for this problem which uses
no information about how the segments were obtained (other than that they
are horizontal or vertical).  Using this, we hence have:

\begin{theorem}
\label{thm:mH2TApprox}
For any $k>0$, there exists a polynomial-time $O(1)$-approximation algorithm for
guarding an orthogonal polygon with sliding $k$-transmitters.
\end{theorem}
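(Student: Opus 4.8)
The plan is to assemble the theorem from the pieces already built up in this section, treating the cross-hitting approximation of \cite{BiedlCLMMV} as a black box. First I would construct, in polynomial time, the discretized instance: compute the horizontal and vertical partition-segments (one per edge, each meeting $O(k)$ vertical edges from the inside), and hence the $O(kn)$ horizontal and $O(kn)$ vertical slices; then the slice-segments (one per slice); then the guard-segments $\Gamma = \Gamma_H \cup \Gamma_V$ (one per edge, so $|\Gamma| \le n$); and finally the set $X$ of crosses, one per non-empty pixel, so $|X| = O(k^2 n^2)$, each supported by its two slice-segments. All of these objects are computable in time polynomial in $n$ and $k$ by straightforward expansion/sweeping, so the size of the cross-hitting instance $(X,\Gamma)$ is polynomial.

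Next I would invoke Lemma~\ref{lem:problemReduction}, which states that $P$ is guarded by $m$ sliding $k$-transmitters if and only if some subset of $m$ guard-segments hits every cross. Applied across all values of $m$, this shows that the minimum number of sliding $k$-transmitters needed to guard $P$ equals the optimum value of the cross-hitting problem on $(X,\Gamma)$. The key point I would stress is that this equivalence is exactly cardinality-preserving in \emph{both} directions: a feasible cross-hitting set $S' \subseteq \Gamma$ of size $t$ is itself a set of $t$ sliding $k$-transmitters that guards $P$ (this is the $(\Leftarrow)$ direction, via Lemma~\ref{lem:seeingBEntirely}), and conversely. Consequently the reduction is approximation-preserving, not merely decision-preserving.

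I would then run the $O(1)$-approximation algorithm of \cite{BiedlCLMMV} for the cross-hitting problem directly on $(X,\Gamma)$. As already observed, that algorithm uses only the fact that the supporting segments of crosses and the candidate hitters are axis-parallel, and makes no use of how these segments were generated; so the fact that our slice-segments are longer (extending to the $(k/2)$th interior edge rather than the first) does not affect its correctness or its guarantee. It returns in polynomial time a cross-hitting set $S' \subseteq \Gamma$ with $|S'| \le c \cdot \mathrm{OPT}_{\mathrm{cross}}$ for an absolute constant $c$. Since $\mathrm{OPT}_{\mathrm{cross}}$ equals the minimum guarding value and $S'$ is itself a guarding set of the same size, $S'$ guards $P$ with at most $c$ times the optimal number of sliding $k$-transmitters, completing the proof.

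I do not anticipate a genuine obstacle, as the conceptual content lives in Lemma~\ref{lem:problemReduction} and the cited result. The single point that warrants explicit care is confirming that the reduction neither inflates nor deflates the optimum, so that the constant factor truly transfers; this is exactly what the two-directional, size-matching form of Lemma~\ref{lem:problemReduction} secures, and I would state that transfer of the ratio explicitly rather than leaving it implicit.
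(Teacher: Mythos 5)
Your proposal is correct and follows essentially the same route as the paper: build the slices, slice-segments, guard-segments, and crosses in polynomial time, use Lemma~\ref{lem:problemReduction} as an exact, cardinality-preserving reduction to the cross-hitting problem, and invoke the $O(1)$-approximation of \cite{BiedlCLMMV} as a black box. Your explicit remark that the two-directional, size-matching form of the lemma is what makes the reduction approximation-preserving (not merely decision-preserving) is a point the paper leaves implicit, but it is the same argument.
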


As in \cite{katz2011}, we also consider the variant when only horizontal sliding
$k$-transmitters are allowed.  This also reduces to the cross-hitting
problem, with the only change that we use
$\Gamma_H$ in place of $\Gamma$. 
This in fact simplifies the problem, because now only
vertical supporting segments are relevant for crosses.  So there is
also an $O(1)$-approximation algorithm for guarding an orthogonal
polygon with horizontal sliding $k$-transmitters.

\section{Conclusion}
\label{sec:conclusion}
In this paper, we studied how to guard an orthogonal polygon using the minimum number of
sliding $k$-transmitters.  We showed that this is NP-hard, even if
the polygon is $y$-monotone.   We also gave an $O(1)$-approximation
algorithm.

The main open problem is to find better approximation factors.  
(The ``$O(1)$'' in \cite{BiedlCLMMV} stems from the use of
$\varepsilon$-nets, and the constant is unspecified but likely quite 
large.)  Is the problem APX-hard? Also, for what subclass of polygons is guarding with sliding $k$-transmitters
polynomial?  This is true for orthogonally convex polygons (one or two
guards are always enough), but are there other, less trivial classes?

\bibliographystyle{plain}
\bibliography{ref}

\end{document}